\newtheorem{theorem}{Theorem}
\newtheorem{corollary}[theorem]{Corollary}
\newtheorem{proposition}[theorem]{Proposition}
\newdefinition{definition}{Definition}
\journal{Mathematical Social Sciences}
\begin{document}

\begin{frontmatter}

\title{Mathematical aspects of degressive proportionality}

\author[1]{Wojciech S\l omczy\'{n}ski\corref{cor1}}
\ead{wojciech.slomczynski@im.uj.edu.pl}
\cortext[cor1]{Corresponding author}
\author[2]{Karol \.{Z}yczkowski}
\address[1]{Institute of Mathematics, Faculty of Mathematics and Computer Science, Jagiellonian University, ul. {\L}ojasiewicza 6, 30-348 Krak\'{o}w, Poland}
\address[2]{Wojty{\l}a Institute, ul. Garncarska 5, 31-115 Krak\'{o}w, Poland}

\begin{abstract}
We analyze mathematical properties of apportionment functions in the context
of allocating seats in the European Parliament. Some exemplary
families of such functions are specified and the corresponding
allocations of seats among the Member States of the European Union are
presented. We show that the constitutional constraints for the apportionment
are so strong that the admissible functions lead to rather similar solutions.
\end{abstract}

\begin{keyword}degressive proportionality \sep convex functions \sep apportionment \sep European Parliament
\MSC[2010] 26A51 \sep 91B14
\end{keyword}

\end{frontmatter}

\section{Introduction}
\label{INT}

One of the major mathematical approaches to the problem of allocating seats
in the European Parliament can be described by the following general scheme.
First, one has to choose a concrete characterization of the size of a given
Member State $i$ by a number $p_{i}$ (for example, equal to the total number
of its inhabitants, citizens, or voters\footnote{Of course, other more
exotic choices are also possible. According to the original text of the
Constitution of the United States (Article I, Section 2) `Representatives
(...) shall be apportioned among the several States which may be included
within this Union, according to their respective Numbers, which shall be
determined by adding to the whole Number of free Persons, (...) three fifths
of all other Persons.' The words `other Persons' here mean the slaves. The
rule was the result of the so called `Three-Fifths compromise' between
Southern and Northern states during the Constitutional Convention in 1787.})
we call here \textsl{population}, and precisely define by which means these
data should be collected and how often they should be updated. Then, one needs
to transform these numbers by an \textsl{allocation }(or
\textsl{apportionment}) \textsl{function} $A$ belonging to a given family
indexed (usually monotonically and continuously) by some \textsl{parameter
}$d$, whose range of variability is determined by the requirement that the
function fulfills constraints imposed by the treaties: is
\textsl{non-decreasing} and \textsl{degressively proportional}.

Additionally, the apportionment function satisfies certain boundary
conditions, $A\left(  p\right)  =m$ and $A\left(  P\right)  =M$, where the
population of the smallest and the largest state equals, respectively, $p$ and
$P$, and the smallest and the largest number of seats are predetermined as,
respectively, $m$ and $M$. (In the case of the European Parliament these
quantities are explicitly bounded by the treaty, $m\geq M_{\min}=6$ and $M\leq
M_{\max}=96$.) To obtain integer numbers of seats in the Parliament one has to
round the values of the allocation function, e.g., using one of three standard
\textsl{rounding methods} (upward, downward or to the nearest integer).
Finally, one has to choose the parameter $d$ in such a way that the sum of the
seat numbers of all Member States equals the given Parliament size $S$, solving
(if possible) in $d$ the equation%

\begin{equation}
\sum_{i=1}^{N}\left[  A_{d}\left(  p_{i}\right)  \right]  =S\text{ ,}
\label{main}%
\end{equation}
where $N$ stands for the number of Member States, $p_{i}$ for the population
of the $i$-th state ($i=1,\ldots,N$), and $\left[  \cdot\right]  $ denotes the
rounded number. Though usually there is a whole interval of parameters
satisfying this requirement, nonetheless, in a generic case, the distribution
of seats established in this way is unique. Thus, this technique bears a
resemblance to divisor methods in the proportional apportionment problem
applied first by Thomas Jefferson in 1792 \citep{BalYou78,Top09}.

The crucial role in this apportionment scheme plays the notion of
\textsl{degressive proportionality}. The principle of degressive
proportionality enshrined in the Lisbon Treaty was probably borrowed from the
discussions on the taxation rules, where the term has appeared already in the
nineteenth century, when many countries introduced income tax for the first
time in their history \citep{You94}. It was already included in the debate on
the apportionment in the Parliament in late 1980s, but at first, it was a
rather vague idea that gradually evolved into a formal legal (and
mathematical) term in the report \citet{LamSev07} adopted by the European
Parliament. There were also suggestions to apply this
general principle to other parliamentary or quasi-parliamentary bodies like
the projected Parliamentary Assembly of the United Nations \citep{Bum10}.

In fact the entire problem of apportionment of seats in the Parliament is
mathematically similar (not counting rounding) to the taxation problem, what
is illustrated in the table below.

\begin{table}[htbp]\footnotesize
\centering
\begin{tabular}{cc}
\toprule
\textbf{apportionment} & \textbf{taxation} \\
\midrule
Member States & Tax payers\\
Population & Income\\
Seats & Post-tax income\\
Allocation function & Post-tax income function\\
Parliament size & Total disposable income \\
Aeats monotonicity & Income order preservation\\
Degressivity of seats distribution & Progressivity of tax distribution\\
Subadditivity of seats distribution & Merging-proofness\\
\bottomrule
\end{tabular}%
\label{tab0}%
\end{table}%

In consequence, the similar mathematical tools can be used to solve both of
them; see for instance \citet{You87,Tho03,Kam06,Hou09,JuMor11}, and \citet{Mor11}, where the
authors use the above presented scheme to consider possible parametric
solutions of the taxation problem or the dual profit-sharing problem. Of
course, the analogy has clear limitations since income and post-tax income are
calculated in the same units, whereas population and seats are not. Moreover,
money is (at least theoretically) infinitely divisible, while seats are indivisible.

Although quite a novelty in politics, nevertheless, the concept of degressive
proportionality is not entirely new in mathematics. It was already analysed in
late 1940s under the name of `quasi-homogeneity' by \citet[Definition
1.4.1]{Ros50}, see also \citet[p.~480]{Kuc09}, and since then studied also under
the name of `subhomogeneity', see e.g. \citet{BurSza05}. Moreover, an
increasing function such that its inverse is degressively proportional (and so
it is an allocation function) is called `star-shaped' (with respect to the
origin) in the mathematical literature. In other words, the function is
degressively proportional if and only if the lines joining points lying below
its graph with the origin do not cross the graph. Star-shaped functions were
introduced in \citet{BruOst62}, and since then have
been studied in many areas of pure and applied mathematics, see e.g.
\citet{DinWol09,Dah10}. Thus, the results concerning this class of functions
can be applied, \textit{mutatis mutandis}, to degressively proportional functions.

Note that in the original definition of the degressive proportionality
formulated in \citet{LamSev07} it was postulated that this property holds for
the number of seats \textsl{after rounding} the values of the allocation
function to whole numbers. However, one can show that there exist such
distributions of population that there is no solution of the apportionment
problem satisfying so understood degressive proportionality
\citep{Ram10,Grietal11a}. In particular, such difficulty arises in situations
where there are a number of Member States having similar populations.
Consequently, in \citet{Grietal11a} it was recommended to weaken this condition
and to amend the definition of degressive proportionality assuming that `the
ratio between the population and the number of seats of each Member State
\textsl{before rounding} to whole numbers must vary in relation to their
respective populations in such a way that each Member from a more populous
Member State represents more citizens than each Member from a less populous
Member State'. This proposal has been recently approved by the The
Constitutional Affairs' Committee of the European Parliament (AFCO). For the
detailed mathematical analysis of the original definition of the degressive
proportionality, see \citet{Lyketal10,Ceg11,Flo11,Rametal11}, and \citet{Ser11}.

In this paper we describe several exemplary families of allocation functions
and discuss their fundamental properties. Mathematical technicalities
collected in Sect.~3-5 can be skipped by more practically oriented readers,
who may proceed to Sect.~\ref{EP}, in which general results are applied to
the European Parliament.

\section{Allocation functions - definition and examples\label{ALL1}}

Before selecting an allocation function $A$ one needs to specify the
boundary conditions $m$ and $M$, which denote the number of seats for the
smallest and the largest member state, with population $p$ and $P$,
respectively. In the case of the European Parliament, the treaty sets the
following bounds only: $m\geq M_{\min}=6$ and $M\leq M_{\mathrm{\max}}=96$.

\begin{definition}
Let $0<p<P$, $0<m<M$, and $pM<Pm$. We call $A:\left[  p,P\right]
\rightarrow\left[  m,M\right]  $ a \textsl{(degressive) allocation function}, if:

\begin{enumerate}
\item (monotonicity) $A$ is non-decreasing;

\item (degressive proportionality) $A$ is \textsl{degressively proportional},
i.e. the function $t\rightarrow A\left(  t\right)  /t$ is non-increasing.
\end{enumerate}

We shall also consider the situation where $P=M=+\infty$, assuming then that
$A:\left[  p,+\infty\right)  \rightarrow\left[  m,+\infty\right)  $. For the
sake of brevity we shall omit the word `degressive' and instead of saying that
`$A$ is a degressive allocation function' we shall simply say that `$A$ is an
allocation function'.
\end{definition}

Below, we consider several families of allocation functions fulfilling
additionally boundary conditions: $A\left(  p\right)  =m$ and $A\left(
P\right)  =M$. Each of them depends on one (free) parameter ($d$) with its
range of variability determined by other assumptions imposed on $A$. For
instance, in case of the allocation of seats in the Parliament, the parameter
$d$ is set by the constraint (\ref{main}) that the total size of the House is
fixed.

Note also that the actual value of the constant $d$ changes from one allocation
function to another.

\begin{enumerate}
\item \textsl{base+prop }functions - the `floor' version:
\begin{equation}
A_{1a}(t):=\max\left[  m,\;\left(  t-P\right)  /d+M\right]  \text{ ,}
\label{basproflo}%
\end{equation}

where $\frac{P}{M}\leq d\leq\frac{P-p}{M-m}$; then the function is convex; and
the `cup' version:
\begin{equation}
A_{1b}\left(  t\right)  :=\min\left[  m+\left(  t-p\right)  /d,\;M\right]
\text{ ,} \label{basprocup}%
\end{equation}
where $\frac{p}{m}\leq d\leq\frac{P-p}{M-m}$; in this case the function is
concave. Note that not only the choice of the parameter $d$, but also the
choice of one of two forms of the base+prop function ($A_{1a}$ or $A_{1b}$)
depends on other constraints ($p_{i}, i=1,\ldots,N$, and $S$) in
(\ref{main}), see also Sect.~\ref{ALL3}.
Observe further that the base+prop+floor and base+prop+cup functions are in a sense
extremal allocation functions satisfying boundary conditions: $A\left(
p\right)  =m$ and $A\left(  P\right)  =M$, since it is clear that every such
function must fulfill the inequalities:
\begin{equation}
\max\left[  m,\;\left(  M/P\right)  t\right]  \leq A(t)\leq\min\left[  \left(
m/p\right)  t,\;M\right]  \label{basprobou}%
\end{equation}
for $t\in\left[  p,P\right]  $, and thus it is bounded from below by a
base+prop+floor function ($d:=P/M$), and from above by a base+prop+cup
function ($d:=p/m$).

\item \textsl{piecewise linear} functions:
\begin{equation}
A_{2a}(t):=\max\left[  m+\left(  t-p\right)  /d,\;\left(  M/P\right)
t\right]  \text{ ,} \label{pielinconvex}%
\end{equation}

where $\frac{P-p}{M-m}\leq d$; the function is convex; or
\begin{equation}
A_{2b}(t):=\min\left[  \left(  m/p\right)  t,\;\left(  t-P\right)
/d+M\right]  \text{ ,} \label{pielinconcave}%
\end{equation}
where $\frac{P-p}{M-m}\leq d$; the function is concave. Again, the choice of
one of two forms of the piecewise linear function ($A_{2a}$ or $A_{2b}$)
depends on constraints in (\ref{main}).

\item \textsl{quadratic} (\textsl{parabolic}) functions:
\begin{equation}
A_{3}\left(  t\right)  :=\left(  \frac{t-p}{P-p}\frac{M}{P}+\frac{P-t}%
{P-p}\frac{m}{p}\right)  t-d\left(  t-p\right)  \left(  P-t\right)  \text{ .}
\label{par}%
\end{equation}

Depending on the system constraints $M,m,P,p$ and the parameter $d$ determined
by the total size $S$ of the House, the function is convex or concave. In
particular, if
\begin{equation}
0\leq d-\Theta\leq\frac{\min\left(  M-m,m-Mp/P\right)  }{\left(  P-p\right)
^{2}} \label{par1} \text{ ,}%
\end{equation}
with $\Theta:=\frac{m/p-M/P}{P-p}$, the function (\ref{par}) is convex. In the
case
\begin{equation}
0\geq d-\Theta\geq-\frac{\min\left(  M-m,mP/p-M\right)  }{\left(  P-p\right)
^{2}} \label{par2}%
\end{equation}
the parabolic allocation function is concave.

\item[4.] \textsl{base+power} functions:
\begin{equation}
A_{4}(t):=M\frac{t^{d}-p^{d}}{P^{d}-p^{d}}+m\frac{P^{d}-t^{d}}{P^{d}-p^{d}%
}\text{ ,} \label{baspow}%
\end{equation}
where either $0<d\leq1$ and $\left(  M/m-1\right)  d\leq\left(  P/p\right)
^{d}-1$, or $1<d$ and $\left(  1-m/M\right)  d\leq1-\left(  p/P\right)  ^{d}$.
In the first case the function is concave, in the second convex. In the
limiting case ($d\rightarrow0$) we get a \textsl{logarithmic} function:
\begin{equation}
A_{l}(t):=\frac{\ln\left(  P^{m}/p^{M}\right)  +\left(  M-m\right)  \ln t}%
{\ln\left(  P/p\right)  }\text{ ,} \label{log}%
\end{equation}
which is an allocation function, if $M/m-1\leq\ln\left(  P/p\right)  $.

\item[5.] \textsl{homographic} functions:
\begin{equation}
A_{5}\left(  t\right)  :=\frac{M\left(  t/M-d\right)  \left(  t-p\right)
+m\left(  t/m-d\right)  (P-t)}{\left(  P/M-d\right)  \left(  t-p\right)
+\left(  p/m-d\right)  \left(  P-t\right)  }\text{ ,} \label{hom}%
\end{equation}
where either $d\leq p/M$ or $d\geq P/m$. In the first case the function is
concave, in the second convex. In the limiting case ($d\rightarrow\pm\infty$)
we get a linear function.
\end{enumerate}

All five families discussed above share a common element: the \textsl{linear}
(\textsl{affine}, $\frac{d}{dt}A_{\operatorname*{lin}}\left(  t\right)
\equiv\operatorname*{const}\geq0$) function $A_{\operatorname*{lin}}:\left[
p,P\right]  \rightarrow\left[  m,M\right]  $ given by the formula
\begin{equation}
A_{\operatorname*{lin}}(t):=M\frac{t-p}{P-p}+m\frac{P-t}{P-p}\text{ .}
\label{lin}%
\end{equation}
On the other hand, if $\frac{d}{dt}\frac{A\left(  t\right)  }{t}\equiv
const\leq0$, then $A$ must be a quadratic function given by (\ref{par}) with
$d=0$, i.e.,
\begin{equation}
A_{q}\left(  t\right)  :=\left(  \frac{t-p}{P-p}\frac{M}{P}+\frac{P-t}%
{P-p}\frac{m}{p}\right)  t\text{ .} \label{spepar}%
\end{equation}

Some of the above solutions were already discussed in the literature, also in
the context of the European Parliament.

The base+prop class which seems to lead to the simplest of all these methods
was first analysed in \citet{Puk07,Puk10a}, see also
\citet{MarRam08,MarRam10}, and became the basis for the recent
proposal, called \textsl{`Cambridge Compromise'}, elaborated in January 2011,
and discussed later by the Committee on Constitutional Affairs (AFCO) of the
European Parliament \citep{Gri11,Grietal11a}. Here we present this method in the
`spline' form, see \citet{MarRam08}. Likewise, one of the methods of
apportionment of seats in the projected Parliamentary Assembly of the United
Nations is based on this model \citep[p. 25]{Bum10}. Note that, in fact,
the composition of the Electoral College that formally elects the President
and Vice President of the United States of America also reflects the base+prop
scheme, where each state is allocated as many electors as it has Senators
(equal base) and Representatives (proportional representation, with at least one
seat per state) in the United States Congress. The idea of combining these two
approaches to the apportionment problem was first put forward by one of the
Founding Fathers of the United States and the future American President,
James Madison in 1788 \citep{Mad1788}.

The quadratic (parabolic) method was proposed and advocated by Ram\'{i}rez
Gonz\'{a}lez and his co-workers in a series of papers
\citep{Ram04,Rametal06,MarRam08,MarRam10,Ram10}.

The methods of apportionment of seats in the European Parliament using
base+power functions were also considered by several authors, see
\citet{TheSch77,Rametal06,Arn08,MarRam08,MarRam10,SloZyc10,Grietal11b}
and \citet{Mob11}. Note that a similar method was proposed for solving the taxation problem
already in the nineteenth century by a Dutch economist Arnold Jacob
Cohen-Stuart \citep{Coh1889}. Moreover, the variant of this method (using the
square-root function) was also considered in \citet[p.~27]{Bum10}.

As far as we know, out of five families presented above, only the piecewise
linear family has not yet been analysed in detail in the European Parliament
context, since the homographic functions have been independently studied
under the name of projective quotas by \citet{Ser11}. On the other hand,
yet another class of `linear-hyperbolic' functions was used both in the
apportionment problem for the European Parliament \citep{SloZyc10} as well as
in the tax schedule proposed by a Swedish economist Karl Gustav Cassel at the
beginning of the twentieth century \citep{Cas1901}. Note, that also the proportional
apportionment method with minimum and maximum requirements \citep[p.~133]{BalYou01,MarRam08} can be
described within this general framework, taking (neither concave nor convex) apportionment
function $A$ given by $A(t)=\operatorname*{med}(m,dt,M)$, where $M/P<d<m/p$, and
$\operatorname*{med}$ stands for the median value of three.

For a simple and general algorithm of constructing families of allocation
functions see Sect.~\ref{DEG}.

\section{Allocation functions - necessary and sufficient
conditions\label{ALL2}}

In this section we present several simple propositions that give necessary and
sufficient conditions for a function $A:\left[  p,P\right]  \rightarrow\left[
m,M\right]  $ to be a (degressive) allocation function. Almost all these facts
belong to mathematical folklore, but we provide short proofs here for the
completeness of presentation. First of all, observe that an allocation
function needs to be continuous, because, as a non-decreasing function, it can
only have jump discontinuities, but this contradicts degressive proportionality.

We start from a simple characterization of allocation functions.

\begin{proposition}
\label{Peeinepro}$A$ is an allocation function if and only if
\begin{equation}
\frac{A\left(  s\right)  }{A\left(  t\right)  }\leq\max\left(  1,\frac{s}%
{t}\right)  \text{ ,} \label{Peeine}%
\end{equation}
or equivalently
\begin{equation}
\min\left(  1,\frac{s}{t}\right)  \leq\frac{A\left(  s\right)  }{A\left(
t\right)  } \label{Peeine2}%
\end{equation}
for every $s,t\in\left[  p,P\right]  $.
\end{proposition}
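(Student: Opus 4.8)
The plan is to observe that inequality (\ref{Peeine}) is really two separate statements bundled together by the $\max$, one for each of the two defining properties of an allocation function. First I would record that since $A$ takes values in $\left[m,M\right]$ with $m>0$, every quotient $A\left(s\right)/A\left(t\right)$ is well defined and positive, so no degenerate cases arise and one may freely divide and reciprocate. I would then split the analysis according to whether $s\leq t$ or $s\geq t$, the case $s=t$ being trivial since both sides equal $1$.

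When $s\leq t$ one has $\max\left(1,s/t\right)=1$, so (\ref{Peeine}) reduces to $A\left(s\right)\leq A\left(t\right)$, which is precisely monotonicity. When $s\geq t$ one has $\max\left(1,s/t\right)=s/t$, so (\ref{Peeine}) reduces to $A\left(s\right)/A\left(t\right)\leq s/t$, equivalently $A\left(s\right)/s\leq A\left(t\right)/t$, which is precisely degressive proportionality of the quotient $t\rightarrow A\left(t\right)/t$. This single case split delivers both implications at once: if $A$ is an allocation function, then applying the relevant property in each range yields (\ref{Peeine}); and conversely (\ref{Peeine}), read in the two ranges separately, returns monotonicity and degressive proportionality respectively.

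For the equivalence of (\ref{Peeine}) and (\ref{Peeine2}), I would interchange the roles of $s$ and $t$ in (\ref{Peeine}) to obtain $A\left(t\right)/A\left(s\right)\leq\max\left(1,t/s\right)$, and then take reciprocals, using the identity $1/\max\left(1,t/s\right)=\min\left(1,s/t\right)$; quantifying over all ordered pairs thus makes the two forms literally the same family of inequalities. There is essentially no serious obstacle: the only point demanding a little care is tracking the direction of the inequality when reciprocating and confirming that the two branches of the $\max$ align exactly with the two cases of the ordering of $s$ and $t$. Once that bookkeeping is fixed, the statement follows immediately.
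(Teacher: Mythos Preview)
Your proof is correct and follows essentially the same approach as the paper: a case split on the order of $s$ and $t$, identifying the $s\leq t$ branch with monotonicity and the $s\geq t$ branch with degressive proportionality. You are slightly more thorough than the paper in that you also spell out the equivalence of (\ref{Peeine}) and (\ref{Peeine2}) via swapping $s,t$ and reciprocating, which the paper leaves implicit.
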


See also \citet[p. 327]{Pee70}.

\begin{proof}
Let $s<t$, then (\ref{Peeine}) is equivalent to $A\left(  s\right)  /A\left(
t\right)  \leq1$. On the other hand for $s>t$ we get $A\left(  s\right)
/A\left(  t\right)  \leq s/t$, as desired.
\end{proof}

Note that $A$ need not be neither concave nor convex. (Consider, e.g., the
allocation function $A:\left[ 2,8\right]  \rightarrow [ \sqrt
{2}+1/2,2\sqrt{2}+1/8 ]  $ given by $A(t)=\sqrt{t}+1/t$ for $2\leq
t\leq8$, that has an inflection point at $t=4$.) However, if $A$ is an
allocation function, then it can be bounded from above by its greatest 
convex minorant and from below by its least concave majorant. Because
of this, it cannot be neither `too convex' nor `too concave'.

\begin{corollary}
If $A$ is an allocation function, then
\begin{align}
\frac{1+\sqrt{p/P}}{2}\overline{A}\left(  t\right)   &  \leq\frac{t\left(
P-p\right)  }{P\left(  t-p\right)  +t\left(  P-t\right)  }\overline{A}\left(
t\right)  \leq\label{bounds}\\
A\left(  t\right)   &  \leq\frac{t\left(  P-p\right)  }{p\left(  P-t\right)
+t\left(  t-p\right)  }\underline{A}\left(  t\right)  \leq\frac{1+\sqrt{P/p}%
}{2}\underline{A}\left(  t\right)  \text{ ,}\nonumber
\end{align}
for each $t\in\left[  p,P\right]  $, where $\underline{A}$ and $\overline{A}$
denote, respectively, the greatest convex minorant function and the least
concave majorant of $A$ (i.e. the largest convex function smaller than $A$ and
the smallest concave function larger than $A$).
\end{corollary}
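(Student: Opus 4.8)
The plan is to split the chain into its four constituent inequalities and to notice that the two outer ones are elementary one-variable optimizations, whereas the two inner ones, $\frac{t(P-p)}{P(t-p)+t(P-t)}\overline A(t)\le A(t)$ and $A(t)\le\frac{t(P-p)}{p(P-t)+t(t-p)}\underline A(t)$, carry the whole force of the hypothesis on $A$. I would base the argument on the one-dimensional (Carath\'eodory) representations of the envelopes,
\[
\overline A(t)=\sup\frac{(s_2-t)A(s_1)+(t-s_1)A(s_2)}{s_2-s_1},\qquad\underline A(t)=\inf\frac{(s_2-t)A(s_1)+(t-s_1)A(s_2)}{s_2-s_1},
\]
where the supremum and infimum run over straddling endpoints $s_1\in[p,t]$, $s_2\in[t,P]$, together with the two-sided estimate furnished by Proposition~\ref{Peeinepro}: for $s_1\le t\le s_2$ one has $\frac{s_1}{t}A(t)\le A(s_1)\le A(t)$ and $A(t)\le A(s_2)\le\frac{s_2}{t}A(t)$.

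For the first inner inequality I would substitute the upper estimates $A(s_1)\le A(t)$ and $A(s_2)\le\frac{s_2}{t}A(t)$ into each chord, so that every chord value is at most $B(s_1,s_2)\,A(t)$ with $B(s_1,s_2):=\frac{2ts_2-t^2-s_1 s_2}{t(s_2-s_1)}$, whence $\overline A(t)\le(\sup B)\,A(t)$. A direct computation gives $\partial_{s_1}B=-t(s_2-t)^2/[t(s_2-s_1)]^2\le0$ and $\partial_{s_2}B=t(t-s_1)^2/[t(s_2-s_1)]^2\ge0$, so $B$ is maximized at the corner $(s_1,s_2)=(p,P)$, where $B(p,P)=\frac{P(t-p)+t(P-t)}{t(P-p)}$ is exactly the reciprocal of the coefficient in question. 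Symmetrically, for the second inner inequality I would use the lower estimates $A(s_1)\ge\frac{s_1}{t}A(t)$ and $A(s_2)\ge A(t)$, giving each chord value at least $C(s_1,s_2)\,A(t)$ with $C(s_1,s_2):=\frac{s_1 s_2-2s_1 t+t^2}{t(s_2-s_1)}$; here $\partial_{s_1}C\ge0$ and $\partial_{s_2}C\le0$, so $C$ is minimized at $(p,P)$ with $C(p,P)=\frac{p(P-t)+t(t-p)}{t(P-p)}$, the reciprocal of the corresponding coefficient. Since $A\ge m>0$ forces $\overline A(t),\underline A(t)>0$, multiplying through preserves the inequality directions.

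The two outer inequalities are then purely scalar. I would show that $g(t):=\frac{t(P-p)}{P(t-p)+t(P-t)}$ attains its minimum over $[p,P]$ at the geometric mean $t=\sqrt{pP}$, with $g(\sqrt{pP})=\frac{1+\sqrt{p/P}}{2}$ (the endpoints give $g=1$), and that $G(t):=\frac{t(P-p)}{p(P-t)+t(t-p)}$ attains its maximum there, with $G(\sqrt{pP})=\frac{1+\sqrt{P/p}}{2}$; both are immediate on differentiating the reciprocals, which are affine combinations of $t$ and $1/t$ up to an additive constant. Multiplying $g(t)\ge\frac{1+\sqrt{p/P}}{2}$ by $\overline A(t)\ge0$ yields the leftmost inequality, and $G(t)\le\frac{1+\sqrt{P/p}}{2}$ together with $\underline A(t)\ge0$ yields the rightmost one.

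The main obstacle is the joint optimization of $B$ and $C$ over the two straddling endpoints in the inner inequalities; the saving grace is that each partial derivative factors as $\pm t(s_i-t)^2/[t(s_2-s_1)]^2$, so the functions are monotone in each variable separately and the extremum sits at the corner $(p,P)$. I would finish by checking the degenerate configurations $t\in\{p,P\}$ and the case where an envelope touches $A$ (so $\overline A(t)=A(t)$ or $\underline A(t)=A(t)$); there the inner inequalities hold trivially because their coefficients are $\le1$ and $\ge1$, respectively.
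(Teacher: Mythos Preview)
Your proof is correct and follows essentially the same strategy as the paper: represent the concave (resp.\ convex) envelope as a supremum (resp.\ infimum) of convex combinations of values of $A$, bound each $A(t_i)$ against $A(t)$ via Proposition~\ref{Peeinepro}, and then optimize the resulting scalar coefficient. The only difference is mechanical: you invoke Carath\'eodory to reduce to two-point chords and locate the extremum of $B$ and $C$ by computing partial derivatives, whereas the paper keeps an arbitrary convex combination $\sum_i\lambda_iA(t_i)$ and reaches the identical bound in one stroke by majorizing the convex map $t_i\mapsto\max(1,t_i/t)$ by its secant on $[p,P]$ and using $\sum_i\lambda_it_i=t$.
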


\begin{proof}
For $t\in\left[  p,P\right]  $ we have $\overline{A}\left(  t\right)
=\sup\sum_{i=1}^{n}\lambda_{i}A\left(  t_{i}\right)  $, where the sum runs
over $\lambda_{i}\geq0$, $t_{i}\in\left[  p,P\right]  $, $n\in\mathbb{N}$,
satisfying $\sum_{i=1}^{n}\lambda_{i}=1$, $\sum_{i=1}^{n}\lambda_{i}t_{i}=t$.
From (\ref{Peeine}) we get $\sum_{i=1}^{n}\lambda_{i}A\left(  t_{i}\right)
\leq A\left(  t\right)  \sum_{i=1}^{n}\lambda_{i}\max\left(  1,\frac{t_{i}}%
{t}\right)  \leq\frac{2Pt-Pp-t^{2}}{t\left(  P-p\right)  }A\left(  t\right)
$. Hence $A\left(  t\right)  \geq\frac{t\left(  P-p\right)  }{2Pt-Pp-t^{2}%
}\overline{A}\left(  t\right)  \geq\frac{1+\sqrt{p/P}}{2}\overline{A}\left(
t\right)  $. The proof for the greatest convex minorant is analogous.
\end{proof}

The next proposition gives a sufficient condition for a convex or concave
non-decreasing function to be an allocation function.

\begin{proposition}
If $A$ is non-decreasing, concave and fulfills $A\left(  t\right)  /t\leq
A\left(  p\right)  /p$ for all $t\in\left[  p,P\right]  $, or if it is
non-decreasing, convex and satisfies $A\left(  t\right)  /t\geq A\left(
P\right)  /P$ for all $t\in\left[  p,P\right]  $, then $A$ is an allocation
function. In particular, every concave function $A:\left[  0,+\infty\right)
\rightarrow\left[  0,+\infty\right)  $ is an allocation function restricted to
any interval $\left[  p,P\right]  $ for $0<p<P$.
\end{proposition}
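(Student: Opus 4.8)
The plan is to verify the two defining conditions of an allocation function. Monotonicity is hypothesized outright in both cases, so the entire task reduces to establishing degressive proportionality, i.e.\ that \(t\mapsto A(t)/t\) is non-increasing on \([p,P]\); equivalently, that \(tA(s)\ge sA(t)\) whenever \(p\le s<t\le P\). One could instead verify the inequality of Proposition~\ref{Peeinepro}, but the direct route is cleaner, and it also sidesteps any differentiability assumption by working with convex combinations rather than derivatives.

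For the concave case I would fix \(p\le s<t\le P\) and write \(s\) as the convex combination \(s=\frac{t-s}{t-p}\,p+\frac{s-p}{t-p}\,t\). Concavity then yields the chord bound \(A(s)\ge\frac{t-s}{t-p}A(p)+\frac{s-p}{t-p}A(t)\). Multiplying by \(t\) and inserting the hypothesis in the form \(A(p)\ge(p/t)A(t)\) gives, after collecting terms, \(tA(s)\ge\frac{A(t)}{t-p}\bigl[p(t-s)+t(s-p)\bigr]\); since \(p(t-s)+t(s-p)=s(t-p)\) this collapses exactly to \(tA(s)\ge sA(t)\), as required. The convex case runs by the mirror-image argument, this time pivoting on the upper endpoint: writing \(t=\frac{P-t}{P-s}\,s+\frac{t-s}{P-s}\,P\), convexity gives \(A(t)\le\frac{P-t}{P-s}A(s)+\frac{t-s}{P-s}A(P)\), and multiplying by \(s\) and using the hypothesis at \(s\) in the form \(A(P)\le(P/s)A(s)\) reduces, via \(s(P-t)+P(t-s)=t(P-s)\), to \(sA(t)\le tA(s)\).

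For the final \emph{in particular} claim, a non-negative concave \(A\) on \([0,+\infty)\) is automatically non-decreasing: concavity makes its secant slopes non-increasing, so a single negative slope would force \(A(c)\to-\infty\) as \(c\to+\infty\), contradicting \(A\ge0\). Then, taking \(s=\frac{s}{t}\,t+(1-\frac{s}{t})\,0\) and using \(A(0)\ge0\) yields \(A(s)\ge\frac{s}{t}A(t)\) for all \(0<s<t\), so \(A(t)/t\) is non-increasing on \((0,+\infty)\); in particular the concave hypothesis above holds on every subinterval \([p,P]\), while the boundary data \(m=A(p)\) and \(M=A(P)\) are inherited from monotonicity.

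I do not anticipate a genuine obstacle: the content is a short estimate, and the only thing that must be chosen correctly is the pivot point—the lower endpoint \(p\) in the concave case, the upper endpoint \(P\) in the convex case. The conceptual point worth flagging is that convexity or concavity \emph{alone} does not suffice; it is precisely the boundary ratio hypothesis (\(A(t)/t\le A(p)/p\), respectively \(A(t)/t\ge A(P)/P\)) that supplies the one extra inequality needed to close the chord estimate.
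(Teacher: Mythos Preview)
Your proof is correct and follows essentially the same route as the paper: in the concave case you write $s$ as a convex combination of $p$ and $t$, apply concavity, and then use the boundary ratio hypothesis $A(p)\ge (p/t)A(t)$ to collapse the chord estimate to $tA(s)\ge sA(t)$; the paper does exactly this (dividing by $s$ rather than multiplying by $t$, which is cosmetic), and leaves the convex case as ``analogous,'' which you spell out with the mirror pivot at $P$. Your handling of the \emph{in particular} clause is more explicit than the paper's and is fine.
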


\begin{proof}
In the former case to show that $A$ is degressively proportional, it is enough
to observe that $A\left(  s\right)  /s=A\left(  \frac{t-s}{t-p}\cdot
p+\frac{s-p}{t-p}\cdot t\right)  /s\geq A\left(  p\right)  \frac{t-s}{\left(
t-p\right)  s}+A\left(  t\right)  \frac{s-p}{\left(  t-p\right)  s}\geq
A\left(  t\right)  /t$ for $s,t\in\left[  p,P\right]  $, $s<t$, as required.
The proof for convex functions is analogous.
\end{proof}

In fact, if $A:\left[  p,P\right]  \rightarrow\left[  m,M\right]  $ is a
restriction of the function defined on the interval $\left[  0,P\right]  $
such that $A\left(  0\right)  =0$, then, to get degressive proportionality, it
is enough to assume that $A$ is concave on average, i.e., that the function
$\left[  0,P\right]  \ni t\rightarrow a\left(  t\right)  :=\frac{1}{t}\int
_{0}^{t}A\left(  s\right)  ds\in\left[  0,M\right]  $ is concave, since
$A(t)/t=a^{\prime}\left(  t\right)  +a\left(  t\right)  /t$ for $0<t\leq P$
and both components are non-increasing functions of $t$ in this case, see
\citet[Theorem 5]{BruOst62}.

We call a function $A:\left[  p,P\right]  \rightarrow\left[  m,M\right]  $
\textsl{subadditive} if $A\left(  s+t\right)  \leq A\left(  s\right)
+A\left(  t\right)  $ holds for every $s,t,s+t\in\left[  p,P\right]  $. The
subadditivity is the necessary condition for a function being an allocation
function, as the next proposition shows. (Analogously, in taxation
progressivity of income tax implies its merging-proofness, see \citet[Corollary
1]{JuMor11}.)

\begin{proposition}
\label{sub}If $A$ is an allocation function, then $A$ is subadditive.
\end{proposition}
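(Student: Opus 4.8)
The plan is to prove subadditivity directly from the degressive proportionality condition, which says that $A(t)/t$ is non-increasing. Suppose $A$ is an allocation function and take $s,t$ with $s+t\in\left[p,P\right]$. Since populations are positive, we have $s,t<s+t$, and because $A$ is defined on $\left[p,P\right]$ with $s+t$ in that interval, both $s$ and $t$ lie in $\left[p,s+t\right]\subseteq\left[p,P\right]$, so $A(s)$, $A(t)$, and $A(s+t)$ are all well defined. The key inequality I would exploit is that $s<s+t$ and $t<s+t$ together with degressive proportionality give $A(s)/s\geq A(s+t)/(s+t)$ and $A(t)/t\geq A(s+t)/(s+t)$.

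From these two inequalities I would multiply through to isolate $A(s)$ and $A(t)$ individually, writing
\begin{equation}
A(s)\geq\frac{s}{s+t}A(s+t)\quad\text{and}\quad A(t)\geq\frac{t}{s+t}A(s+t)\text{ .}
\end{equation}
Adding these two estimates term by term then yields
\begin{equation}
A(s)+A(t)\geq\frac{s}{s+t}A(s+t)+\frac{t}{s+t}A(s+t)=\frac{s+t}{s+t}A(s+t)=A(s+t)\text{ ,}
\end{equation}
which is exactly the subadditivity inequality $A(s+t)\leq A(s)+A(t)$. This is the entire argument; the summing of the two weighted bounds is the crucial step that makes the coefficients collapse to one.

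Alternatively, I could route the proof through Proposition~\ref{Peeinepro} rather than through the raw definition, applying the inequality (\ref{Peeine2}) with the pairs $(s,s+t)$ and $(t,s+t)$; since $s\leq s+t$ and $t\leq s+t$, the minima equal $s/(s+t)$ and $t/(s+t)$ respectively, and one recovers the same two weighted bounds before adding them. Either formulation is essentially a one-line computation once the degressive proportionality of $A$ is unpacked at the two points $s$ and $t$ relative to $s+t$.

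I do not expect any genuine obstacle here: the argument is elementary and the only point requiring a moment's care is the bookkeeping that $s$ and $t$ indeed lie in the domain $\left[p,P\right]$, which follows at once from $s,t>0$ and $s+t\in\left[p,P\right]$. Note also that subadditivity is obtained without any convexity or concavity assumption, consistent with the earlier remark that an allocation function need be neither convex nor concave; only monotonicity (implicitly, through positivity of values) and degressive proportionality are used.
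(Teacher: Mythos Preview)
Your core argument is correct and essentially identical to the paper's: both derive $A(s+t)/(s+t)\leq A(s)/s$ and $A(s+t)/(s+t)\leq A(t)/t$ from degressive proportionality and then combine them (the paper bounds $A(s+t)/(s+t)$ by the min and multiplies by $s+t$ written as a weighted sum; you multiply each inequality separately and add---the same computation).

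One bookkeeping slip worth correcting: you assert that $s+t\in[p,P]$ alone forces $s,t\in[p,s+t]$, but the lower bound fails in general (take $p=2$, $s=1$, $t=3$, so $s+t=4\in[p,P]$ yet $s<p$). This does not damage the proof, because the paper's definition of subadditivity already \emph{assumes} $s,t,s+t\in[p,P]$; you should simply take that as the hypothesis rather than try to derive it.
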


See also \citet[Theorem 1.4.3]{Ros50} and \citet[Theorem 7.2.4]{HilPhi57}.

\begin{proof}
Let $s,t,s+t\in\left[  p,P\right]  $. From the degressive proportionality we get 
$A\left(  s+t\right)  /\left(s+t\right)  \leq\min\left(  A\left(  s\right)  /s,A\left(  t\right)
/t\right)  $. Hence $A\left(  s+t\right)  \leq \left(s+t\right) 
\left( \frac{s}{s+t}\frac{A\left(
s\right)  }{s}+\frac{t}{s+t}\frac{A\left(  t\right)  }{t}\right)=A\left(  s\right)
+A\left(  t\right)  $.
\end{proof}

The converse implication fails in general, but it holds for convex and
non-decreasing functions.

\begin{corollary}
If $A:\left[  p,+\infty\right)  \rightarrow\left[  m.+\infty\right)  $ is
convex and non-decreasing, then $A$ is an allocation function if and only if
it is subadditive.
\end{corollary}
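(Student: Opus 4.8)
The forward implication is immediate: every allocation function is subadditive by Proposition~\ref{sub}, whose proof applies verbatim when $P=M=+\infty$. Hence the entire content lies in the converse, and the plan is to show that a convex, non-decreasing, subadditive $A$ is degressively proportional; since monotonicity is assumed, this suffices for $A$ to be an allocation function.

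The key observation exploits the interplay between the supporting lines of a convex function and subadditivity at a doubled argument. Since $A$ is convex on $[p,+\infty)$, the right derivative $A'_{+}(t)$ exists for every $t\ge p$ and the supporting-line inequality $A(u)\ge A(t)+A'_{+}(t)\,(u-t)$ holds for all $u\ge p$. Evaluating this at $u=2t$ gives $A(2t)\ge A(t)+t\,A'_{+}(t)$. On the other hand, subadditivity applied with $s=t$ yields $A(2t)\le 2A(t)$. Chaining the two, $A(t)+t\,A'_{+}(t)\le A(2t)\le 2A(t)$, and therefore $t\,A'_{+}(t)\le A(t)$, that is, $A'_{+}(t)\le A(t)/t$ for every $t\ge p$. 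This single step is the heart of the argument.

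It then remains to convert this pointwise inequality into degressive proportionality. The function $g(t):=A(t)/t$ is continuous (as $A$ is) and has right derivative $g'_{+}(t)=\bigl(t\,A'_{+}(t)-A(t)\bigr)/t^{2}\le 0$ at every $t\ge p$. A continuous function whose right (Dini) derivative is everywhere non-positive on an interval is non-increasing, so $g$ is non-increasing and $A$ is degressively proportional. (Alternatively, iterating subadditivity gives $A(nt)\le nA(t)$, whence $A(t)/t\ge\lim_{u\to\infty}A(u)/u$ for all $t$, and one may invoke the sufficient condition of the earlier convex-case proposition in its $P=+\infty$ form.)

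I expect the only genuine subtlety to be the passage from the one-sided derivative bound to monotonicity of $g$: one must either cite the standard monotonicity criterion for Dini derivatives, or use the integral representation $A(t)=A(t_{0})+\int_{t_{0}}^{t}A'_{+}(u)\,du$ valid for convex functions, after which $g(t)-g(s)=\int_{s}^{t}g'(u)\,du\le 0$ settles the matter. Everything else — existence of $A'_{+}$ and the supporting-line inequality — is standard convex-analysis folklore.
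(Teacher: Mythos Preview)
Your argument is correct, but the paper's proof is considerably more direct and avoids derivatives altogether. Instead of passing through the supporting line and the Dini-derivative monotonicity criterion, the paper simply writes, for $p<s<t$, the point $t$ as the convex combination $t=\tfrac{s}{t}\cdot s+\tfrac{t-s}{t}\cdot(s+t)$, applies the convexity inequality to get $A(t)\le\tfrac{s}{t}A(s)+\tfrac{t-s}{t}A(s+t)$, and then invokes subadditivity $A(s+t)\le A(s)+A(t)$ to obtain $A(t)\le A(s)+\tfrac{t-s}{t}A(t)$, which rearranges immediately to $A(t)/t\le A(s)/s$.

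The two proofs exploit the same pair of hypotheses but at different pairs of points: you use convexity between $t$ and $2t$ together with subadditivity at $(t,t)$, obtaining the infinitesimal condition $tA'_{+}(t)\le A(t)$ and then integrating back; the paper uses convexity between $s$ and $s+t$ together with subadditivity at $(s,t)$, obtaining the finite-difference conclusion in one stroke. Your route has the virtue of isolating the differential characterization of degressive proportionality, but it costs you the existence of one-sided derivatives, the supporting-line inequality, and the monotonicity lemma for Dini derivatives (or the absolute-continuity integral representation), all of which the paper's two-line argument sidesteps entirely. The parenthetical alternative you sketch --- iterating to $A(nt)\le nA(t)$ and invoking the convex case of the earlier proposition with $P=+\infty$ --- is less clean, since that proposition is stated only for finite $P$ and would itself need a limiting argument.
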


See \citet[Theorem 1.4.6]{Ros50}.

\begin{proof}
According to Proposition \ref{sub} it is enough to show that convex,
non-decreasing and subadditive function is degressively proportional. Let
$p<s<t$. Then $A\left(  t\right)  \leq\frac{s}{t}A\left(  s\right)  +\left(
1-\frac{s}{t}\right)  A\left(  s+t\right)  \leq\frac{s}{t}A\left(  s\right)
+\left(  1-\frac{s}{t}\right)  \left(  A\left(  s\right)  +A\left(  t\right)
\right)  =A\left(  s\right)  +\left(  1-\frac{s}{t}\right)  A\left(  t\right)
$. Hence $A\left(  t\right)  /t\leq A\left(  s\right)  /s$, as desired.
\end{proof}

\section{Allocation functions - concave or convex?\label{ALL3}}

Analyzing possible schemes of allocating seats in the European Parliament
several authors consider only concave allocation functions
\citep{MarRam08,MarRam10}. However, as we have seen above, in the class of
degressively proportional functions convex and concave functions seem to play
similar roles, and both types of functions are represented in each of five
basic classes considered.

The affine allocation function (which lies on the border between the concave
and the convex realm) can serve as a solution of the apportionment problem if
and only if $\sum_{i=1}^{N}A_{\operatorname*{lin}}\left(  p_{i}\right)
\approx S$. This, however, is only an approximate statement because the effect
is influenced by the rounding procedure. Thus, in a concrete case, whether
convex or concave functions should be used in the allocation scheme depends
approximately on the sign of the expression $\sum_{i=1}^{N}%
A_{\operatorname*{lin}}\left(  p_{i}\right)  -S$. Taking into account that
\begin{equation}
\sum\nolimits_{i=1}^{N}A_{\operatorname*{lin}}\left(  p_{i}\right)
-S=  \frac{\left(  \left\langle P\right\rangle -p\right)
\left(  \left\langle M\right\rangle -m\right)  N}{P-p}  \left(
\mu-\rho\right)  \text{ ,}\label{calculation}%
\end{equation}
with%
\begin{equation}
\rho:=\frac{P-\left\langle P\right\rangle }{\left\langle P\right\rangle
-p}\label{rho}%
\end{equation}
and
\begin{equation}
\mu:=\frac{M-\left\langle M\right\rangle }{\left\langle M\right\rangle
-m}\text{ ,}\label{mu}%
\end{equation}
where $\left\langle P\right\rangle $ and $\left\langle M\right\rangle $
denote, respectively, the mean population of a country and the mean number of
seats per country, we see that the solution of the dilemma depends on which of
two numbers is greater $\rho$ or $\mu$. If $\rho\geq\mu$ one should use
concave functions for resolving the problem, if $\rho\leq\mu$, convex. Since%
\begin{gather*}
\left(  \mu-\rho\right)  \left(  \left\langle P\right\rangle -p\right)
\left(  \left\langle M\right\rangle -m\right)  N^{2}=\\
T\left(  M-m\right)  -\left(  S\left(  P-p\right)  -N\left(  mP-Mp\right)
\right)  \text{ ,}%
\end{gather*}
where $T$ is the total population of the Union, $S$ is the size of the House,
and $N$ denotes the number of the Member States, the inequality $\rho\geq\mu$
can be rewritten in the following form affine in $T$, $S$, and $N$:
\begin{equation}
T\leq S\cdot\frac{P-p}{M-m}-N\cdot\frac{mP-Mp}{M-m}\text{ .}\label{TSN}%
\end{equation}
In particular, this implies that any accession of a new state of moderate size
(to leave $p$ and $P$ unchanged) to the Union (which means $T,N\uparrow$),
keeping `constitutional' parameters ($m$, $M$, $S$) fixed, reduces the
probability of finding concave solution of the apportionment problem.
Furthermore, the right hand side of (\ref{TSN}) is a decreasing function of
both $m$ and $M$ (as long as $Nm<S<NM$, which is both a natural and necessary
assumption) and an increasing function of $S$. In consequence, seeking concave
solutions, one has either to enlarge the size of the House, or to lower the
number of seats assigned to the smallest or to the largest Member State (or both).

Note, however, that the treaties define only the minimal ($M_{\min}$) and
maximal ($M_{\max}$) numbers of seats in the Parliament, requiring merely that
$m:=A\left(  p\right)  \geq M_{\min}$ and $M:=A\left(  P\right)  \leq M_{\max
}$, as well as the value of $S$. While we have to set the exact values of $m$
and $M$ to start the allocation procedure described in Sect.~\ref{INT}, our
choice is formally limited only by these inequalities. Thus, if we believe
that the concavity is a desirable feature of an allocation function and it
should be possibly incorporated to its definition, we have to agree that the
enlargement process will result at some point (defined in fact by the equality
in (\ref{TSN})) in lowering the value of $M$ below $M_{\max}$. The only other
solution of this problem one can imagine is to introduce an amendment to the
treaty either decreasing the minimal number of seats $M_{\min}$ or increasing
the total number of seats $S$. However, these two alternatives may be
difficult to accept for political reasons, and in this case decreasing the
number $M$ seems to be the most feasible solution of the problem within the
`concave' realm.

\section{Degressive proportionality through logarithmic eyes\label{DEG}}

We believe that it is sometimes better to analyse allocation functions in
logarithmic (log-log)\ coordinates, since this approach provides us with a
number of benefits, namely:

\begin{itemize}
\item  It is more convenient to plot a graph of population-seats relationship
in these coordinates, and so, to compare different allocation methods, since
we have more small than large member states in the European Union. NB, this is
quite a natural situation from the statistical point of view (`the larger the fewer').

\item  In this setting it is easier to express our assumptions (monotonicity
and degressive proportionality) in a uniform way.

\item  This approach gives us a better framework to analyse certain additional
properties of allocation methods.
\end{itemize}

\begin{definition}
Define $L:\left[  \ln p,\ln P\right]  \rightarrow\left[  \ln
m,\ln M\right]  $ by
\begin{equation}
L(\ln t):=\ln A(t)
\end{equation}
for $x\in\left[  \ln p,\ln P\right]  $. In other words, $L = \ln \circ \, A \circ \exp$
or $A=\exp \circ \, L \circ \ln$.
\end{definition}

The choice of a logarithmic base corresponds to the choice of a unit and is
not important here.

\begin{proposition}
Assume that a function $A:\left[  p,P\right]  \rightarrow\left[
m,M\right]  $ is differentiable. Then the following equivalences are true:\medskip%

\begin{tabular}
[c]{ccc}%
$A$ is non-decreasing & $\Leftrightarrow$ & $L^{\prime}\geq0$\medskip\\
$A$ is degressively proportional & $\Leftrightarrow$ & $L^{\prime}\leq
1$\medskip\\
$A$ is an allocation function & $\Leftrightarrow$ & $0\leq L^{\prime}\leq
1$\medskip
\end{tabular}
.
\end{proposition}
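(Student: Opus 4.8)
The plan is to prove the three equivalences by translating the defining
properties of $A$ into conditions on $L$ via the substitution
$L(x) = \ln A(e^x)$, where $x = \ln t$. Since $A$ is assumed
differentiable (and positive, as it maps into $[m,M]$ with $0<m$), the
composition $L$ is differentiable as well, and the chain rule gives the
key identity
\begin{equation}
L'(x) = \frac{d}{dx}\ln A(e^{x}) = \frac{A'(e^{x})\,e^{x}}{A(e^{x})}
      = \frac{t\,A'(t)}{A(t)}\text{ ,}
\end{equation}
with $t = e^{x}$. Because $t/A(t) > 0$ throughout, this single formula is
the engine that drives all three equivalences, and establishing it
carefully is the first step.

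Next I would read off the monotonicity equivalence. Since $t/A(t)>0$, the
identity shows $L'(x)\geq 0$ if and only if $A'(t)\geq 0$, i.e. exactly
when $A$ is non-decreasing. For the degressive-proportionality
equivalence, I would compute the derivative of $t\mapsto A(t)/t$ and find
\begin{equation}
\frac{d}{dt}\frac{A(t)}{t} = \frac{t\,A'(t)-A(t)}{t^{2}}
  = \frac{A(t)}{t^{2}}\bigl(L'(x)-1\bigr)\text{ ,}
\end{equation}
using the previous identity to rewrite $t\,A'(t) = A(t)\,L'(x)$. As
$A(t)/t^{2}>0$, the sign of this derivative matches the sign of
$L'(x)-1$, so $t\mapsto A(t)/t$ is non-increasing precisely when
$L'\leq 1$. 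The third equivalence is then immediate: by the Definition,
$A$ is an allocation function iff it is simultaneously non-decreasing and
degressively proportional, which by the two preceding equivalences is iff
$L'\geq 0$ and $L'\leq 1$, that is, $0\leq L'\leq 1$.

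The main obstacle is not conceptual but a matter of matching quantifiers
and bookkeeping the change of variables. One must be careful that the
condition ``$L'\geq 0$'' is asserted at every $x\in[\ln p,\ln P]$ and
corresponds, under the bijection $t=e^{x}$, to ``$A'(t)\geq 0$'' at every
$t\in[p,P]$; the strict positivity of the Jacobian factor
$t/A(t)$ is what makes the correspondence of signs exact rather than
merely one-directional. I would also remark that differentiability of
$A$ is used only to express the two monotonicity conditions through
derivatives; the equivalences themselves hold more generally, but the
differentiable formulation gives the cleanest statement and is all that is
needed here.
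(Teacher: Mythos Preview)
Your proof is correct. The paper actually states this proposition without proof, treating the equivalences as immediate; your argument via the chain-rule identity $L'(x) = tA'(t)/A(t)$ and the quotient-rule computation of $\frac{d}{dt}\bigl(A(t)/t\bigr)$ is precisely the natural way to fill in the omitted details, and nothing more is needed.
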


In particular, the above statement gives us a clear mathematical
interpretation of degressive proportionality. Now, our task can be reduced to
a search for a function $L:\left[  \log p,\log P\right]  \rightarrow\left[
\log m,\log M\right]  $ fulfilling $0\leq L^{\prime}\leq1$. These can be
smoothly realized in a three-fold way:

\begin{enumerate}
\item $L$ is affine (i.e. $L^{\prime}$ is constant, i.e. $L^{\prime}\equiv
c\in\left[  0,1\right]  $);

\item $L$ is convex (i.e. $L^{\prime}$ increases from, say, $0$ to $1$) (i.e.
$A$ is \textsl{geometrically convex}, see \citet{Mat97});

\item $L$ is concave (i.e. $L^{\prime}$ decreases from, say, $1$ to $0$) (i.e.
$A$ is \textsl{geometrically concave}, see \citet{Mat97}).
\end{enumerate}

The first scenario leads to the \textsl{power function }(or in other words, a
base+power function with the base $0$) given by $A(t):=b \, t^{d_{\ast}}$, where
\begin{equation}
d_{\ast}:=\left(  \ln\left(  M/m\right)  \right)  /\left(  \ln\left(  P/p\right)
\right)  \label{exponent}%
\end{equation}
and
\begin{equation}
b := \left(M-m\right)/(P^{d}-p^{d}) = e^{[  \left(  \ln
m\right)  \left(  \ln P\right)  -\left(  \ln M\right)  \left(  \ln p\right)]
/\ln\left(  P/p\right)  }\text{ .} \label{base}%
\end{equation}

Rather surprisingly, the distinction between the second and third possibility
seems to have a clear interpretation in terms of properties of allocation
function $A$, namely,\textsl{\ }the properties of \textsl{sub-} and
\textsl{superproportionality}. The notion of subproportionality and the dual
notion of superproportionality were introduced into the decision theory by
Daniel Kahneman, a Nobel Prize laureate in economy, and Amos Tversky, a
mathematical psychologists, in 1979 \citep{KahTve79} and since then used by
many authors, see e.g. \citet{AlDha10}. Let us recall their definition.

\begin{definition}
We say that $A$ is \textit{superproportional} (\textit{subproportional}) iff
for every $s,t\in\operatorname*{dom}(A)$, $s\leq t$ and $0\leq r\leq1$ such
that $rs,rt\in\operatorname*{dom}(A)$ we have
\begin{equation}
\frac{A(rs)}{A(rt)}\geq\frac{A(s)}{A(t)}\text{ \ \ }\left(  \frac
{A(rs)}{A(rt)}\leq\frac{A(s)}{A(t)}\right)  \text{ .} \label{supsub}%
\end{equation}
\end{definition}

\begin{proposition}
Let $L:\left[  \log p,\log P\right]  \rightarrow\left[  \log m,\log M\right]
$ and $A=\exp\circ L\circ\log$. The following equivalences hold:

\begin{itemize}
\item $L$ is convex iff $A$ is superproportional;

\item $L$ is concave iff $A$ is subproportional.
\end{itemize}
\end{proposition}

\begin{proof}
Note that $A$ is superproportional iff $L(b+a)-L(b)\leq L(c+a)-L(c)$ for $\log
p\leq b\leq c\leq c+a\leq\log P$. This property is equivalent to convexity of
$L$. The proof of the second equivalence is analogous.
\end{proof}

To illustrate this property consider two pairs of member states,
Romania/France and Lithuania/Hungary, with the similar population
quotient ($s/t\approx1/3$) and another such configuration: Finland/Portugal
and Latvia/Ireland ($s/t\approx1/2$). In Tab.~\ref{tab1}. the values of seat quotients
for five methods analysed in Sect. \ref{ALL1} are shown. Note that in all these cases the seat
quotient for the `smaller' pair is greater than for the `larger' one.

\begin{table}[htbp]
\centering
\caption{Population ratio (PQ) for exemplary pairs of member states and the
corresponding quotients of the number of seats (SQ) in the European Parliament
for five classes of allocation functions: $1$ = \textit{base+prop},
$2$ = \textit{piecewise linear}, $3$ = \textit{parabolic},
$4$ = \textit{base+power}, $5$ = \textit{homographic} with
 the rounding to
the nearest integer.}\smallskip
\begin{tabular}{ccccccc}
\toprule
{\small ratio} & {\small PQ} & {\small SQ1} & {\small SQ2}& {\small SQ3}& {\small SQ4}& {\small SQ5} \\
\midrule
{\small RO/FR} & {\small 0.332} & {\small 0.376} & {\small 0.397} &
{\small 0.413} & {\small 0.418} & {\small 0.413}\\
{\small LT/HU} & {\small 0.332} & {\small 0.556} & {\small 0.632} &
{\small 0.526} & {\small 0.526} & {\small 0.526}\\
{\small Fl/PT} & {\small 0.503} & {\small 0.677} & {\small 0.737} &
{\small 0.684} & {\small 0.650} & {\small 0.684}\\
{\small LV/IE} & {\small 0.503} & {\small 0.727} & {\small 0.769} &
{\small 0.727} & {\small 0.750} & {\small 0.727}\\
\bottomrule
\end{tabular}%
\label{tab1}%
\end{table}%

Using other words, a \textsl{superproportional method} leads to the following
property of an allocation system (at least before rounding):

\begin{center}
\textit{The smaller a pair of states is, the larger is the gain}

\textit{of the small member in the pair over the large one.}
\end{center}

Thus, this is in fact a kind of \textsl{degressive-degressive proportionality}%
. It is easy to show that if an allocation function $A$ is subproportional,
then it must be concave, and if it is convex it is necessarily superproportional.

This approach leads also to a simple algorithm for constructing allocation
functions, see also \citet[Sect. 4]{AlDha10}. Choose a continuous function
$h:\left[  p,P\right]  \rightarrow\left[  0,1\right]  $ such that
\begin{equation}
\int_{p}^{P}\frac{h(s)}{s}ds=\ln\left(  M/m\right)  \text{ .} \label{intcon}%
\end{equation}
Solving the first-order homogeneous linear differential equation of the form
\begin{equation}
A^{\prime}\left(  x\right)  =\frac{h(x)}{x}A\left(  x\right)  \label{diffeq}%
\end{equation}
with the initial condition $A\left(  p\right)  =m$ we get the allocation
function given by the formula
\begin{equation}
A(t)=m\exp\left(  \int_{p}^{t}\frac{h(s)}{s}ds\right)  \label{allfor}%
\end{equation}
that fulfills also the final condition $A(P)=M$. In fact, every differentiable
allocation function can be obtain in this way. Moreover, $A$ is
superproportional (resp. subproportional) iff $h$ is increasing (resp.
decreasing), which provides a simple test for checking superproportionality.

To illustrate this technique consider the function $h:\left[  p,P\right]
\rightarrow\left[  0,1\right]  $ given by
\begin{equation}
h\left(  t\right)  =\frac{d}{1+ct^{-d}}\text{ ,} \label{hbaspow}%
\end{equation}
where the exact value of $c:=\frac{mP^{d}-Mp^{d}}{M-m}$ is determined by the
integral condition (\ref{intcon}), and we assume additionally that either
$d_{1}<d\leq1$, where
\begin{equation}
d_{1}:=\inf\{0<d<1:\left(  M/m-1\right)  d\leq\left(  P/p\right)
^{d}-1\}\text{ ,} \label{d1}%
\end{equation}
or $1<d<d_{2}$, where
\begin{equation}
d_{2}:=\sup\{d>1:\left(  1-m/M\right)  d\leq1-\left(  p/P\right)  ^{d}\}\text{
,} \label{d2}%
\end{equation}
in order to ensure that $0\leq h\leq1$. Applying (\ref{allfor}) we get a
base+power function $A$ given by (\ref{baspow}).

Clearly, the function $h$ defined by (\ref{hbaspow}) is increasing for $c>0$
and decreasing for $c<0$, and so the necessary and sufficient condition for
$A$ being superproportional (resp. subproportional) in this case is that $c>0$
(resp. $c<0$) or equivalently $d>d_{\ast}$ (resp. $d<d_{\ast}$), where
$d_{\ast}$ is given by (\ref{exponent}) and $d_{1}<d_{\ast}<1$.

Summarizing, we have five possible forms of the base+power allocation function:

\begin{itemize}
\item  concave and subproportional function for $d_{1}<d<d_{\ast}$;

\item  power function for $d=d_{\ast}$;

\item  concave and superproportional function for $d_{\ast}<d<1$;

\item  affine function for $d=1$;

\item  convex and superproportional function for $1<d<d_{2}$.
\end{itemize}

Note, however, that in a concrete situation the choice of the value of $d$ is determined by the
constraint (\ref{main}).

\section{The European Parliament\label{EP}}

For the European Parliament we have the following values of parameters:
$p=412\,970$, $M_{\min}=6$, $P=81\,802\,257$, $M_{\max}=96$, $T=501\,103\,425$%
, $S=751$, and $N=27$. Assuming that the upper and the lower bounds are
saturated, $m=M_{\mathrm{min}}$ and $M=M_{\mathrm{max}}$ we obtain
$\rho\approx3.\,485\geq3.\,126\approx\mu$, so our choice of an allocation
function is limited to concave functions. However, it follows from (\ref{TSN})
that for the Parliament of size $703$ or less we would have to find the
solution of the apportionment problem in the realm of convex functions or
otherwise to relax the constraints considering some $M<M_{\max}$. (Due to
rounding, this number may be somewhat smaller, cf. \citet{Kel11}.) This means also that,
in fact, we have currently only about fifty seats to allocate freely besides
the linear (or, saying more precisely, affine) distribution.

Analyzing five families of allocation functions and three rounding methods we
get fifteen possible solutions for the apportionment problem, see Tab.~\ref{tab2}.

\begin{sidewaystable*}[htbp]
\caption{Fifteen solutions of the apportionment problem for the
European Parliament (five classes of allocation functions: $1$ =
\textit{base+prop}, $2$ = \textit{piecewise linear}, $3$ = \textit{parabolic},
$4$ = \textit{base+power}, $5$ = \textit{homographic}; three rounding methods:
$d$ = \textit{downwards}, $m$ = \textit{to the nearest integer}, $u$ =
\textit{upwards}); LT = the distribution of seats under the Lisbon Treaty;
population figures are taken from the Eurostat website (OJ 22.12.2010 L 338/47).}
\smallskip
\centering
\begin{tabular}{cccccccccccccccccc}
    \toprule
    Country & Population & LT    & 1d    & 1m    & 1u    & 2d    & 2m    & 2u    & 3d    & 3m    & 3u    & 4d    & 4m    & 4u    & 5d    & 5m    & 5u \\
    \midrule
    Germany & 81802257 & 96    & 96    & 96    & 96    & 96    & 96    & 96    & 96    & 96    & 96    & 96    & 96    & 96    & 96    & 96    & 96 \\
    France & 64714074 & 74    & 86    & 85    & 83    & 77    & 78    & 78    & 81    & 80    & 80    & 79    & 79    & 79    & 80    & 80    & 80 \\
    United Kingdom & 62008048 & 73    & 82    & 81    & 80    & 74    & 75    & 75    & 78    & 78    & 77    & 76    & 76    & 76    & 77    & 77    & 77 \\
    Italy & 60340328 & 73    & 80    & 79    & 78    & 73    & 73    & 73    & 76    & 76    & 75    & 74    & 74    & 74    & 76    & 76    & 75 \\
    Spain & 45989016 & 54    & 62    & 62    & 61    & 57    & 57    & 58    & 62    & 61    & 60    & 60    & 59    & 59    & 61    & 61    & 60 \\
    Poland & 38167329 & 51    & 53    & 52    & 51    & 49    & 49    & 49    & 53    & 52    & 52    & 52    & 51    & 51    & 53    & 52    & 51 \\
    Romania & 21462186 & 33    & 32    & 32    & 32    & 31    & 31    & 31    & 33    & 33    & 32    & 33    & 33    & 32    & 33    & 33    & 32 \\
    Netherlands & 16574989 & 26    & 26    & 26    & 26    & 26    & 26    & 26    & 27    & 27    & 26    & 27    & 27    & 27    & 27    & 27    & 26 \\
    Greece & 11305118 & 22    & 19    & 19    & 19    & 20    & 20    & 20    & 20    & 20    & 20    & 21    & 21    & 20    & 20    & 20    & 20 \\
    Belgium & 10839905 & 22    & 19    & 19    & 19    & 20    & 20    & 19    & 20    & 20    & 20    & 20    & 20    & 20    & 20    & 20    & 20 \\
    Portugal & 10637713 & 22    & 18    & 18    & 19    & 19    & 19    & 19    & 19    & 19    & 19    & 20    & 20    & 20    & 20    & 19    & 19 \\
    Czech Republic & 10506813 & 22    & 18    & 18    & 18    & 19    & 19    & 19    & 19    & 19    & 19    & 20    & 20    & 19    & 19    & 19    & 19 \\
    Hungary & 10014324 & 22    & 17    & 18    & 18    & 19    & 19    & 19    & 19    & 19    & 19    & 19    & 19    & 19    & 19    & 19    & 19 \\
    Sweden & 9340682 & 20    & 17    & 17    & 17    & 18    & 18    & 18    & 18    & 18    & 18    & 18    & 18    & 18    & 18    & 18    & 18 \\
    Austria & 8375290 & 19    & 15    & 16    & 16    & 17    & 17    & 17    & 16    & 16    & 16    & 17    & 17    & 17    & 17    & 17    & 17 \\
    Bulgaria & 7563710 & 18    & 14    & 15    & 15    & 16    & 16    & 16    & 15    & 15    & 15    & 16    & 16    & 16    & 15    & 15    & 15 \\
    Denmark & 5534738 & 13    & 12    & 12    & 13    & 14    & 14    & 14    & 13    & 13    & 13    & 13    & 13    & 13    & 13    & 13    & 13 \\
    Slovakia & 5424925 & 13    & 12    & 12    & 12    & 14    & 14    & 14    & 12    & 13    & 13    & 13    & 13    & 13    & 13    & 13    & 13 \\
    Finland & 5351427 & 13    & 12    & 12    & 12    & 14    & 14    & 14    & 12    & 13    & 13    & 13    & 13    & 13    & 12    & 13    & 13 \\
    Ireland & 4467854 & 12    & 11    & 11    & 11    & 13    & 13    & 13    & 11    & 11    & 12    & 12    & 12    & 12    & 11    & 11    & 12 \\
    Lithuania & 3329039 & 12    & 9     & 10    & 10    & 12    & 12    & 11    & 10    & 10    & 10    & 10    & 10    & 11    & 10    & 10    & 10 \\
    Latvia & 2248374 & 9     & 8     & 8     & 9     & 11    & 10    & 10    & 8     & 8     & 9     & 9     & 9     & 9     & 8     & 8     & 9 \\
    Slovenia & 2046976 & 8     & 8     & 8     & 8     & 10    & 10    & 10    & 8     & 8     & 9     & 8     & 9     & 9     & 8     & 8     & 9 \\
    Estonia & 1340127 & 6     & 7     & 7     & 8     & 10    & 9     & 9     & 7     & 7     & 8     & 7     & 7     & 8     & 7     & 7     & 8 \\
    Cyprus & 803147 & 6     & 6     & 6     & 7     & 9     & 9     & 9     & 6     & 7     & 7     & 6     & 7     & 7     & 6     & 7     & 7 \\
    Luxembourg & 502066 & 6     & 6     & 6     & 7     & 7     & 7     & 8     & 6     & 6     & 7     & 6     & 6     & 7     & 6     & 6     & 7 \\
    Malta & 412970 & 6     & 6     & 6     & 6     & 6     & 6     & 6     & 6     & 6     & 6     & 6     & 6     & 6     & 6     & 6     & 6 \\

    EU-27 & 501103425 & 751   & 751   & 751   & 751   & 751   & 751   & 751   & 751   & 751   & 751   & 751   & 751   & 751   & 751   & 751   & 751 \\ \bottomrule

\end{tabular}%
\label{tab2}%
\end{sidewaystable*}

Observe that all these solutions are quite similar, which is a consequence of
the fact that our choice is limited by two factors: the predetermined shape of
the graph of an allocation function, and the fact that more than ninety
percent of seats are in a sense distributed in advance. More precisely, the
results for the parabolic, base+power, and homographic allocation functions
are almost identical, whereas the choice of the base+prop functions is
advantageous for large countries, and the choice of the piecewise linear
functions seems to be beneficial for small countries.

The influence of the choice of a rounding method on the distribution of seats
is a non-trivial mathematical problem even for proportional apportionment \citep{BalYou01,Jan11},
where it was proven that, statistically, the rounding downwards is more often
advantageous for large countries and the rounding upwards for small countries,
see \citet{Schetal03,DrtSch05,Sch08}. In the case of the European Parliament
one can observe a similar effect for the base+linear, parabolic, base+power
and homographic functions, where the rounding downwards is the best
possibility and the rounding upwards is the worst for large countries (from
the Netherlands to France), whereas for small countries (from Malta to
Austria) the situation is reversed. However, for the piecewise linear class we
find completely different pattern, and so it is not clear to what extent this
rule applies to degressively proportional apportionment.

\begin{figure*}
\includegraphics[width=0.5\textwidth]{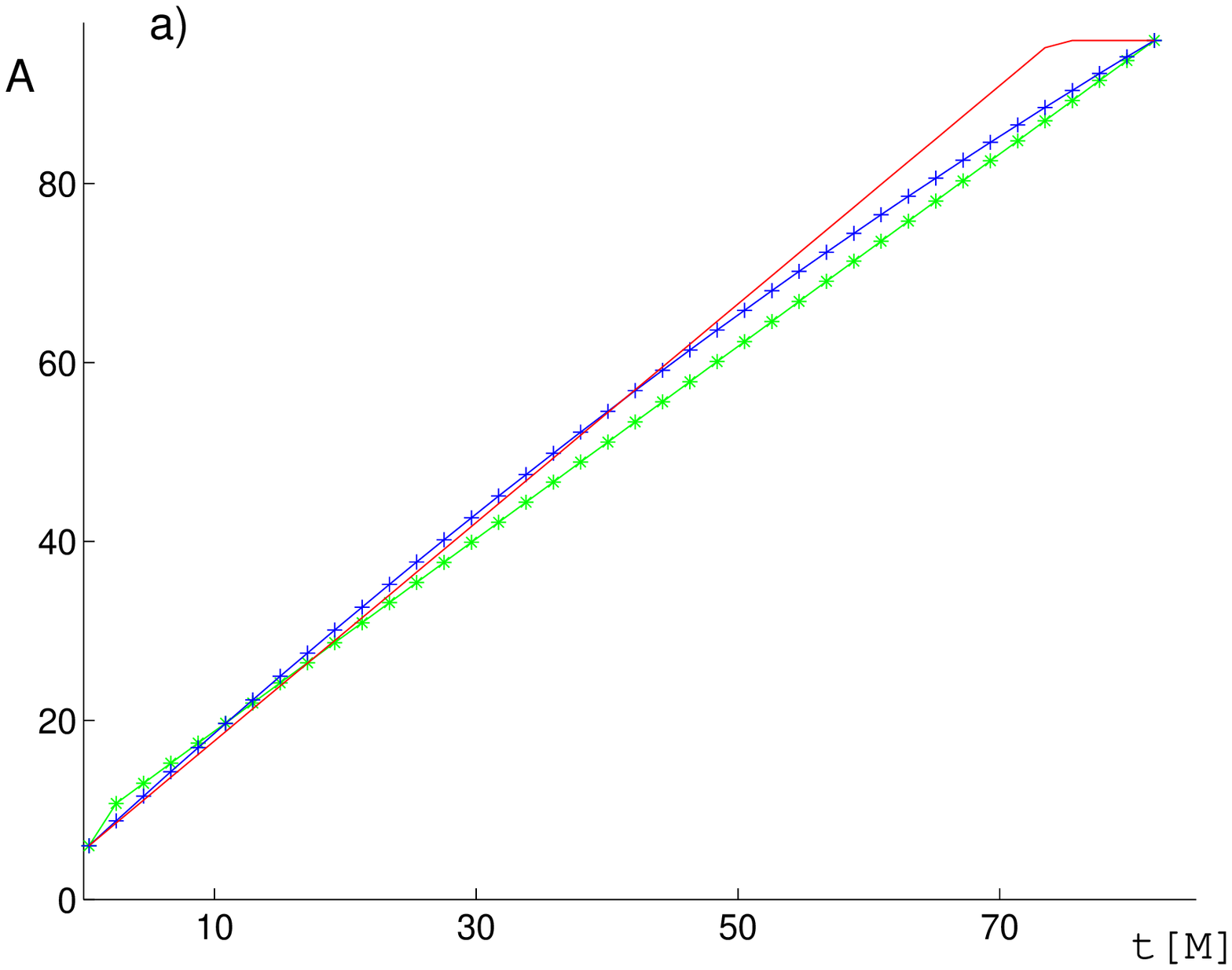}
\includegraphics[width=0.5\textwidth]{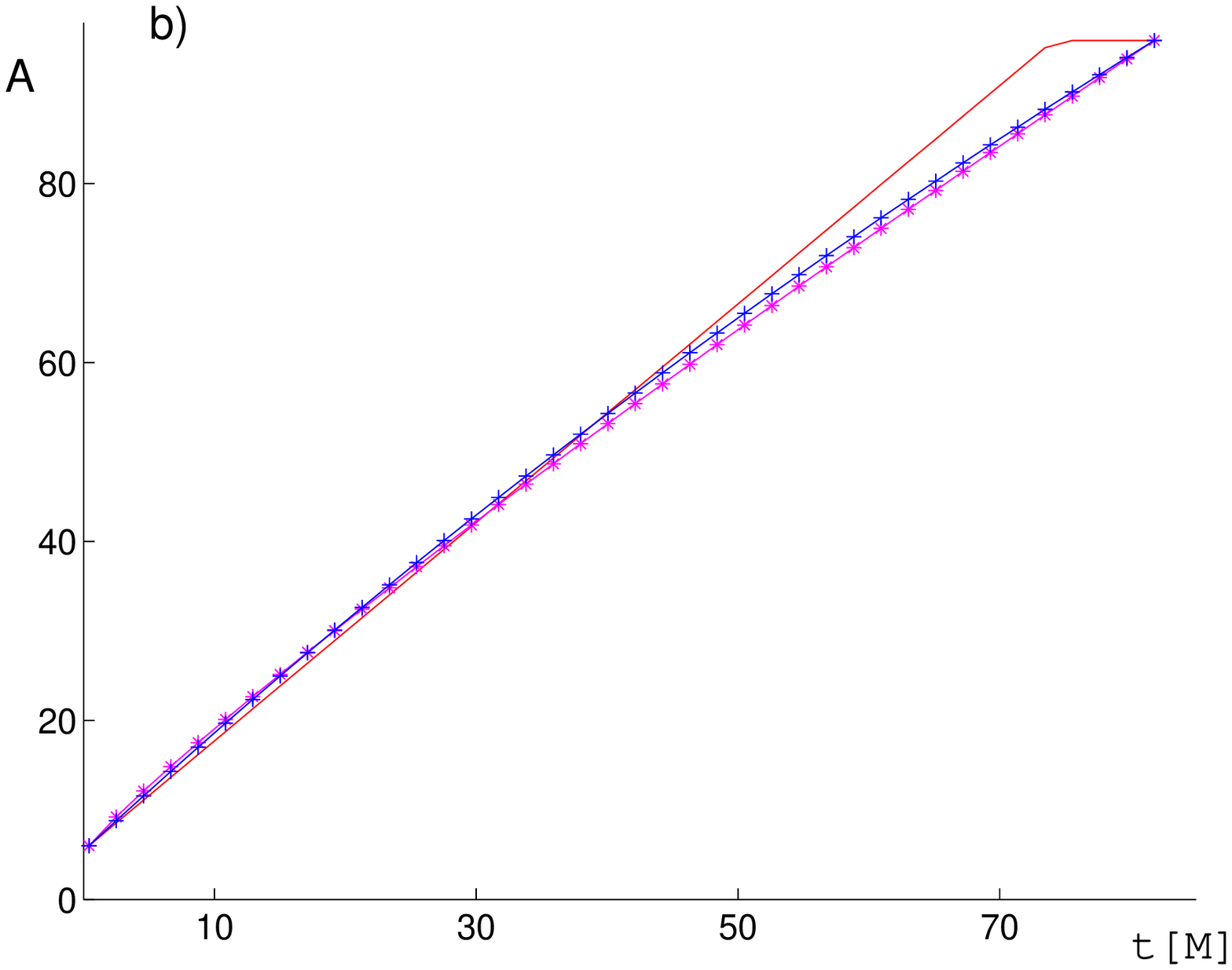}
\caption{Allocation functions applied to the European
Parliament: panel a) concave \textit{base+prop} function (\ref{basprocup}) (solid
line), \textit{piecewise linear} function (\ref{pielinconcave}) ($\ast$), and \textit{parabolic}
function (\ref{par}) ($+$); panel b) function (\ref{basprocup}) drawn as a
reference solid line, \textit{base+power} function (\ref{baspow}) ($\ast$), and
\textit{homographic} function (\ref{hom}) ($+$). The argument $t$ denotes the
population of a state in millions, while $A$ is scaled to determine the
corresponding number of seats in the Parliament consisting of $S=751$ members
with the constraints $m=6$ and $M=96$ seats. }%
\label{fig1}
\end{figure*}

As regards superproportionality, the base+prop method is superproportional in
the `affine' part of its domain, i.e. for all countries but the largest one,
the piecewise linear method for all countries but two smallest ones, and the
parabolic (resp. homographic) method are superproportional for small and
medium countries and subproportional for large five (resp. six) ones.

The only one of the five methods that is superproportional in the whole domain
$\left[  p,P\right]  $\ is the base+power method. In fact, we showed that this
method is superproportional as long as $d>d_{\ast}$, where $d_{\ast}$ is given
by (\ref{exponent}). In the analysed case $d_{\ast}\approx0.\,524$ and $d=0.865$,
$0.894$, $0.922$ depending on the rounding method chosen, so the condition is
clearly fulfilled. Though it is not known whether superproportionality is what
the authors of the Lisbon Treaty really intended, when they formulated the
`degressive proportionality' rule, we think that it is worth to realize that
the base+power method fulfills it for all pairs, whereas the other methods can
violate it for some countries. Thus base+power method is in a sense more
degressively proportional, or one can say degressively proportional in more
perfect way, than other methods analysed above. Incidentally, the base+power
solution with $c=0.5$ (the square root) results (with downward rounding) in a
round number of $1000$ members of the Parliament.

In \citet{Grietal11a} the authors decided to select the method called
`Cambridge Compromise', which is in this case equivalent to the base+prop
method (as defined above) with the rounding to the nearest integer, mainly
because of its obvious simplicity. However, this solution has been criticized
for being `not enough degressively proportional' \citep{Mob11} and departing
too much from the \textit{status quo}. In \citet{Grietal11b} the solution very
similar to the base+power method discussed here is considered `as a step along
a continuous transition from the negotiated status quo composition to the
constitutionally principled Cambridge Compromise.' (Indeed this method is
closest to the \textit{status quo} out of all methods analyzed in Tab.~\ref{tab2}.) The
crucial point in these discussions seems to be the meaning of the term
`degressive proportionality'. Is it only a lame form of (pure)
proportionality, as it was actually suggested in \citet{Grietal11b}
or is it a separate notion that requires distinct mathematical and political
solutions, as \citet{Mob11} claims? In this paper we have tried to shed new
light on this debate, analyzing mathematical properties of degressively
proportional allocation functions and indicating the differences between
various classes of such functions.

If we are looking for a degressively proportional (resp. degressively
proportional and superproportional) and increasing function, in the log-log
realm we have to find a function (resp. convex function) with the derivative
contained between $0$\ and $1$. Adding to this, three constraints related to
the minimum and maximum number of seats and to the size of the House, we see
that our choice is in fact very limited and all the solution satisfying these
conditions must look quite similar -- see Fig. 2.

The key possibility to vary the allocation schemes considerably is to change
the number $M$ of the seats allotted to the largest member state. As specified
in the Treaty of Lisbon the upper bound reads $M_{\mathrm{\max}}=96 $, but
this bound needs not to be saturated and one may also take $M<M_{\mathrm{\max
}}$. By doing so, one introduces more freedom into the space of possible
solutions, as more seats can be allotted besides the affine distribution.

Note also that by extending the Union and keeping the number $M$ fixed (which
is, however, in the `concave realm', doable only up to a certain total
population of the Union), the seats for the new member states are donated by
all but the largest state. If any further enlargement of the Union was
performed according to this scheme, the ratio of the seats in the European
Parliament allocated to the largest state would remain constant. In
consequence, as the number $N$ of the member states was increased, the voting
power of the largest state in the European Union would grow.

These arguments show that the choice of the number $M$ selected to design an
allocation system is crucial. The issue: under what conditions the constraint
$M=M_{\mathrm{\max}}$ should be relaxed seems to be equally important as the
choice of the actual form of allocation function. As regards the latter, it is
rather difficult task to distinguish in practice one of them. From an academic
perspective, however, it would be interesting to base the solution of the
`degressive' allocation problem on an axiomatic approach, possibly considering
some additional properties of allocation functions as concavity and superproportionality.

\medskip

\textit{Acknowledgements.} It is a pleasure to thank Geoffrey Grimmett and
Friedrich Pukelsheim for inviting us to the Cambridge Apportionment Meeting,
where this work was initiated, and all the participants of this meeting for
fruitful and stimulating discussions, as well as Axel Moberg for providing us
with an earlier version of his paper and for interesting correspondence.\bigskip

\end{document}